\newtheorem{theorem}{Theorem}
\newtheorem{lemma}[theorem]{Lemma}
\newcommand{\bb}{\mathbb}
\newcommand{\cC}{\mathcal{C}}
\newcommand{\cD}{\mathcal{D}}
\newcommand{\cL}{\mathcal{L}}
\newcommand{\be}{\beta}
\newcommand{\de}{\delta}
\newcommand{\De}{\Delta}
\newcommand{\ga}{\gamma}
\newcommand{\la}{\lambda}
\newcommand{\ep}{\epsilon}
\newcommand{\zp}{\zeta_{p}}
\newcommand{\pr}{\prime}
\newcommand{\zs}{\{0\}}
\newcommand{\sm}{\setminus}
\newcommand{\supp}{\text{supp}}
\newcommand{\ul}{\underline}
\newcommand{\lan}{\langle}
\newcommand{\ran}{\rangle}
\newcommand{\wti}{\widetilde}
\newcommand{\Tr}{\text{Tr}}
\newcommand{\TQq}{\Tr^{Q}_{q}}
\newcommand{\Tqp}{\Tr^{q}_{p}}
\newcommand{\su}{\succ}
\newcommand{\F}{\mathbb{F}}
\newcommand{\Fp}{\mathbb{F}_p}
\newcommand{\Fq}{\mathbb{F}_q}
\newcommand{\FQ}{\mathbb{F}_Q}
\newcommand{\ux}{\underline{x}}
\newcommand{\ub}{\underline{b}}
\newcommand{\uy}{\underline{y}}
\newcommand{\subsp}{\left[{\FQ^t \atop r}\right]}
\newcommand{\tN}{\widetilde{N}}
\newcommand{\dimq}{\dim_{\Fq}}
\newcommand{\Hp}{H^{\perp}}
\newcommand{\tw}{\wti{w}}
\newcommand{\wHr}{\widetilde{H}_r}
\newcommand{\wWh}{\widetilde{W}_h}
\newcommand{\Hrp}{H_r^{\perp}}
\newcommand{\fm}{\frac{m}{2}}
\newcommand{\Fqm}{\F_{q^{\fm}}}
\newcommand{\rp}{r^{\pr}}
\newcommand{\Ls}{\cL_s}
\newcommand{\Lrp}{\cL_{r^{\pr}}}
\newcommand{\CN}{C^{(N,Q)}}
\newcommand{\CNq}{C^{(N,q)}}
\newcommand{\Ct}{C^{(2,Q)}}
\newcommand{\etaN}{\eta^{(N,Q)}}
\newcommand{\etaNq}{\eta^{(N,q)}}
\newcommand{\etaNg}{\etaN_{g^h\sum_{j=1}^tb_j\be_j^h}}
\newcommand{\etang}{\eta^{(1,Q)}_{g^h\sum_{j=1}^tb_j\be_j^h}}
\newcommand{\etaNy}{\etaN_{y_h}}
\newcommand{\etao}{\eta^{(1,Q)}}
\begin{document}

\title{The Weight Hierarchy of Some Reducible Cyclic Codes}

\author{Maosheng Xiong, Shuxing Li and Gennian Ge
\thanks{M. Xiong is with the Department of Mathematics, Hong Kong University of Science and Technology, Clear Water Bay, Kowloon, Hong Kong (e-mail: mamsxiong@ust.hk).}
\thanks{S. Li is with the Department of Mathematics, Zhejiang University, Hangzhou 310027, Zhejiang, China (e-mail: sxli@zju.edu.cn).}
\thanks{G. Ge is with the School of Mathematical Sciences, Capital Normal University, Beijing 100048, China (e-mail: gnge@zju.edu.cn). He is also with Beijing Center for Mathematics and Information Interdisciplinary Sciences, Beijing, 100048, China.}
\thanks{The research of M. Xiong was supported by RGC grant number 606211 and 609513 from Hong Kong. The research of G. Ge was supported by the National Natural Science Foundation of China under Grant No. 61171198 and Grant No. 11431003, the Importation and Development of High-Caliber Talents Project of Beijing Municipal Institutions, and Zhejiang Provincial Natural Science Foundation of China under Grant No. LZ13A010001.}
}

\maketitle

\begin{abstract}
The generalized Hamming weights (GHWs) of linear codes are fundamental parameters, the knowledge of which is of great interest in many applications. However, to determine the GHWs of linear codes is difficult in general. In this paper, we study the GHWs for a family of reducible cyclic codes and obtain the complete weight hierarchy in several cases. This is achieved by extending the idea of \cite{YLFL} into higher dimension and by employing some interesting combinatorial arguments. It shall be noted that these cyclic codes may have arbitrary number of nonzeroes.
\end{abstract}

\begin{keywords}
Cyclic code, exponential sum, generalized Hamming weight, weight hierarchy.
\end{keywords}

\section{Introduction} \label{sec1}

Let $\cC$ be an $[n,k]$ linear code over the finite field $\Fq$ of order $q$, that is, $\cC \subset \Fq^n$ is a $k$-dimensional vector space over $\Fq$. For any (linear) subcode $\cD \subset \cC$, the {\em support} of $\cD$ is defined to be
$$
\supp(\cD)=\{i: 0 \le i \le n-1, c_i \ne 0 \mbox{ for some } (c_0,c_1,\ldots,c_{n-1}) \in \cD \}.
$$
For $1 \le r \le k$, the $r$-th {\em generalized Hamming weight} (GHW) of $\cC$ is given by
$$
d_r(\cC)=\min \left\{|\supp(\cD)|: \cD \subset \cC \mbox{ and }  \dimq(\cD)=r\right\}.
$$
Here $|\supp(\cD)|$ denotes the cardinality of the set $\supp(\cD)$. The set $\{d_r(\cC): 1 \le r \le k\}$ is called the {\em weight hierarchy} of $\cC$. Note that $d_1(\cC)$ is just the minimum distance of $\cC$.

Including the minimum distance, the GHWs of linear codes provide additional fundamental information. The notion was introduced by Helleseth, Kl{\o}ve, and Mykkeltveit \cite{HKM,K1} and was first used by Wei \cite{Wei} in cryptography to fully characterize the performance of linear codes when used in a wire-tap channel of type II \cite{OW} or as a $t$-resilient function. The GHWs can be used to characterize the performance of certain secret sharing schemes based on linear codes \cite{KU}. Recently, the concept of GHWs has been extended to linear network codes \cite{NYZ}, where they were used to characterize the security performance of linear network codes in a wiretap network \cite{RSS}. Apart from these cryptographic applications, the GHWs also provide detailed structural information of linear codes, the knowledge of the GHWs can be used to
\begin{itemize}
\item[1).] compute the state and branch complexity profiles of linear codes \cite{F,KTFL};
\item[2).] indicate efficient ways to shorten linear codes \cite{HK1};

\item[3).] derive subtle upper bounds on the covering radius of linear codes \cite{JL2};
\item[4).] determine the erasure list-decodability of linear codes \cite{G};
\item[5).] provide bounds on the list size in the list decoding of certain codes \cite{GGR}, etc.
\end{itemize}
In conclusion, the GHWs of linear codes provide fundamental parameters of linear codes which are important in many applications.

The study of GHWs has attracted considerable attention in the past two decades, and many results have been obtained in the literature. For example, general lower bounds and upper bounds on GHWs were derived \cite{ABL,CLZ,HKLY,HKY,Wei}, an efficient algorithm to compute the GHWs of cyclic codes was proposed \cite{JL1}, and the GHWs have been determined or estimated for many series of linear codes such as Hamming codes \cite{Wei}, Reed-Muller codes \cite{HP,Wei}, binary Kasami codes \cite{HK1}, Melas and dual Melas codes \cite{VV2}, BCH codes and their duals \cite{CC,C,DSV,FTW,MPP,SC,VV1,VV4,VV3,V2}, trace codes \cite{CMN,GO,SV,VV5,VV6}, product codes \cite{HK3,MW,P,S,SW,WY} and algebraic geometry codes \cite{BM,BLV,DFGL,HTV,HK4,M,MR,YKS}, etc. However, generally speaking, computing the GHWs of linear codes is difficult, and the complete weight hierarchy is known for only a few cases (see for example \cite{BM,HK1,HK2,HTV,P,V1,Wei,WY,YLFL}). Interested readers may refer to Section 7.10 of the excellent textbook \cite{HP1} for a brief introduction to GHWs, and to \cite{TV} for a comprehensive survey of GHWs via a geometric approach.

In a recent interesting paper \cite{YLFL}, the authors employed some new ideas from number theory to study the GHWs of irreducible cyclic codes and obtained the weight hierarchy for several cases. Their results extend some previous works \cite{HK2,V1}. Inspired by this work, in this paper we study the GHWs of a family of reducible cyclic codes which were introduced in \cite{YXDL} where the weight distribution was obtained in several cases. This family of cyclic codes may be characterized as having arbitrary number of nonzeroes and contain as special cases many subfamilies of cyclic codes whose weight distribution was investigated in the literature, and in particular contain the irreducible cyclic codes considered in \cite{YLFL}. We determine the weight hierarchy of this family of cyclic codes in several cases by extending the ideas of \cite{YLFL} into higher dimension and by employing some interesting combinatorial arguments.

We organize the paper as follows. In Section \ref{sec2}, we introduce the family of reducible cyclic codes we are interested in and state the main results Theorems \ref{sec3:main1} and \ref{sec3:main2}. In Section \ref{sec3} we briefly introduce some basic concepts and notions which will be used in the proofs. In Section \ref{sec4} we derive an expression of GHWs in terms of Gauss periods. Sections \ref{sec:p1} and \ref{sec:p2} are devoted to the proofs of parts (i) and (ii) of Theorem \ref{sec3:main1} respectively. We prove Theorem \ref{sec3:main2} in Section \ref{sec:p3}. Section \ref{sec6} concludes the paper.

\section{Main results}\label{sec2}


Let $q=p^s$, $Q=q^m$ where $p$ is a prime number, $s,m$ are positive integers. Let $\ga$ be a primitive element of the finite field $\FQ$. Assume the following three assumptions:
\begin{itemize}
\item[i)] $e \mid (Q-1)$, $a \not\equiv 0 \pmod{Q-1}$, $e \ge t \ge 1$;
\item[ii)] For $1 \le i \le t$, $a_i \equiv a+\frac{Q-1}{e}\De_i \pmod{Q-1}$ where $\De_i \not\equiv \De_j \pmod{e}$ for $i \ne j$ and $\gcd(\De_2-\De_1,\ldots,\De_t-\De_1,e)=1$;
\item[iii)] $\deg h_{a_i}(x)=m$ for $1 \le i \le t$ and $h_{a_i}(x) \ne h_{a_j}(x)$ for $i \ne j$. Here $h_a(x)$ denotes the minimal polynomial of $\ga^{-a}$ over $\Fq$.
\end{itemize}
Let us define
\begin{align*}
\de&=\gcd(Q-1,a_1,a_2,\ldots,a_t),\\
n&=\frac{Q-1}{\de}, \qquad N=\gcd\left(\frac{Q-1}{q-1},ae\right).
\end{align*}
It is clear that $\de \mid \frac{Q-1}{e}$ and hence
\begin{equation}\label{con1}
e\de \mid N(q-1).
\end{equation}
Under the above three assumptions, we define $\cC$ to be the cyclic code of length $n$ over $\Fq$ with parity check polynomial given by $\prod_{i=1}^t h_{a_i}(x)$, where $a_i$'s are specified according to the assumptions. It is known that $\cC$ is an $[n,tm]$ cyclic code with $t$ nonzeroes.

We remark that this family of cyclic codes were originally defined in \cite{YXDL} in which the weight distribution was obtained in several cases. There is an abundance of $\cC$ in the family, given the flexibility of parameters $e,t,\De_i$'s. In particular, it was observed in \cite[Lemma 6]{YXDL} that Assumption iii) always holds true if $N \le \sqrt{Q}$, which is the case in this paper because we only treat $N=1,2$.
The main results of the paper are as follows:
\begin{theorem} \label{sec3:main1}
Let $\cC$ be the cyclic code defined as above with $e=t \ge 1$. Let $d_r:=d_r(\cC)$ be the $r$-th GHW of $\cC$.
\begin{itemize}
\item[(i).] If $N=1$, then we have
\begin{equation*}
d_r=\frac{q^m-1}{\de}\left(1-\frac{s}{t}\right)-\frac{q^{(t-s)m-r}-1}{t\de}, \quad \mbox{ if } (t-s-1)m < r \le (t-s)m,
\end{equation*}
where $0 \le s \le t-1$.
\item[(ii).] If $N=2$, then we have
\begin{equation*}
d_r=\begin{cases}
      \frac{q^m-1}{\de}(1-\frac{s}{t})-\frac{1}{t\de}(q^{(t-s-\frac12)m-r}+1)(q^{\fm}-1) & \text{if $(t-s-1)m < r \le \left(t-s-\frac12\right)m$},\\
      \frac{q^m-1}{\de}\left(1-\frac{s}{t}\right)-\frac{2}{t\de}(q^{(t-s)m-r}-1) & \text{if $\left(t-s-\frac12\right)m < r \le (t-s)m$},
      \end{cases}
\end{equation*}
where $0 \le s \le t-1$.
\end{itemize}
\end{theorem}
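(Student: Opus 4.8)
The plan is to reduce the computation of $d_r(\cC)$ to a combinatorial optimization problem over $\Fq$-subspaces, following the philosophy of \cite{YLFL} but carried out in dimension $t$ rather than dimension $1$. First I would recall the standard description of $\cC$ via the trace map: each codeword is indexed by a tuple $(x_1,\dots,x_t)\in\FQ^t$, with coordinates $\Tr_{Q/q}(x_1\xi^{a_1}+\dots+x_t\xi^{a_t})$ as $\xi$ runs over the relevant coset. Consequently an $r$-dimensional subcode $\cD$ corresponds to an $r$-dimensional $\Fq$-subspace $H\subseteq\FQ^t$ (via a basis of $\FQ$-linear functionals), and $n-|\supp(\cD)|$ counts the number of indices where \emph{all} members of $\cD$ vanish, i.e. where a certain $H$-indexed sum of characters is "fully degenerate." The key identity, which I expect Section~\ref{sec4} supplies, expresses $|\supp(\cD)|$ — equivalently its complement — as a main term $\frac{q^m-1}{\de}\bigl(1-\tfrac{|H\cap V|}{q^r\cdot(\text{something})}\bigr)$ plus a Gauss-period correction, where $V$ is a distinguished subspace attached to the defining data $(e,a,\De_i)$. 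With $e=t$ the parameters $\De_1,\dots,\De_t$ are all distinct mod $t$, which forces $V$ to be (up to scalars) the "diagonal-type" subspace $\{(y,y,\dots,y)\}$ or a twist thereof, and this rigidity is what makes the extremal problem tractable.

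Next I would set up the optimization: minimizing $|\supp(\cD)|$ over $r$-dimensional $\cD$ becomes maximizing, over $r$-dimensional $\Fq$-subspaces $H\subseteq\FQ^t$, a quantity of the form $\dim_{\Fq}(H\cap V)$ together with (when $N=2$) a secondary quadratic-character term. For part~(i), $N=1$ means the Gauss-period correction collapses to a clean geometric count, so $d_r$ is governed purely by how large $H\cap V$ can be forced to be while $\dim H=r$; writing $r=(t-s-1)m+j$ with $0<j\le m$, the best strategy is to take $H$ to contain as many "coordinate copies" of $\FQ$ as the dimension budget allows — $s$ of them fully accounted for by the $-\tfrac{s}{t}$ term — and then use the remaining $j\le m$ dimensions inside one more copy, producing the $-\tfrac{q^{(t-s)m-r}-1}{t\de}$ term. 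The matching lower bound (that no subspace does better) is the combinatorial heart: I would argue that any $r$-dimensional $H$ projects onto the $t$ coordinate factors in a way constrained by dimension counting, and a counting/convexity argument over the $t$ projections shows the displayed value is optimal. For part~(ii), $N=2$ introduces a quadratic Gauss sum, hence a $\pm(q^{m/2}-1)$ ambiguity; here $m$ must be even, $\Fqm\subset\FQ$ plays a role, and the support count splits according to whether the relevant bilinear/quadratic form on $H$ is "balanced" — this is exactly the source of the two-case formula, with the breakpoint at $r=(t-s-\tfrac12)m$ marking the transition between the regime where one can afford a half-dimension of $\Fqm$-structure and the regime where one cannot.

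The main obstacle, as usual in weight-hierarchy computations, will be the lower bound in the extremal problem: showing that \emph{every} $r$-dimensional subcode has support at least the claimed $d_r$. The natural approach is an inductive/greedy argument on $r$ together with a careful analysis of how an $\Fq$-subspace of $\FQ^t$ can distribute its dimension among the $t$ "slots," but the bookkeeping is delicate because the $\De_i$ enter through the Gauss-period values and, for $N=2$, through sign conditions that interact with the subspace geometry. I anticipate needing a lemma (of the flavor of the combinatorial arguments advertised in the abstract) that bounds $\sum$ over a subspace of Gauss-period contributions by a function of $\dim(H\cap V)$ alone, plus a separate argument pinning down when equality is achievable; the $N=2$ case will additionally require tracking a quadratic residue condition modulo the structure of $\Fqm$. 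Once these are in hand, matching the optimum against the explicit constructions above yields both parts of the theorem.
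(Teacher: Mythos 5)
Your overall frame---reduce $d_r$ to maximizing a character-sum functional over $r$-dimensional $\Fq$-subspaces $H_r\subseteq\FQ^t$, then solve the extremal problem by a greedy/convexity argument over how $H_r$ distributes its dimension among the $t$ coordinates---is the paper's approach, and for part (i) your sketch is essentially on track (modulo one geometric slip: after the natural change of variables $y_i=g^i\sum_j b_j\be^{ij}$ there is no single ``diagonal'' subspace $V$; the relevant objects are the $t$ coordinate hyperplanes $V_h=\{\uy: y_h=0\}$, and the quantity to maximize is $\sum_{h=1}^t q^{\dimq(H_r\cap V_h)}$, which is handled lexicographically using $t\le q-1$). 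The genuine gap is in part (ii). You propose ``a lemma that bounds $\sum$ over a subspace of Gauss-period contributions by a function of $\dim(H\cap V)$ alone,'' but for $N=2$ no such lemma can exist: the Gauss period $\etat_{y_h}$ takes two distinct nonzero values according to the cyclotomic class of $y_h$, so $\sum_{\uy\in H_r}\etat_{y_h}$ depends on how the projection of $H_r$ to the $h$-th coordinate meets $\Ct_0$ versus $\Ct_1$, not merely on $\dimq(H_r\cap V_h)$. The missing idea is a dualization: writing the sum over $H_r$ as a sum over $\FQ^t$ weighted by characters of the orthogonal complement $\Hrp$ (with respect to the trace form) collapses everything to $F(H_r)=q^r\sum_{h=1}^t|U_h\cap W_h|$, where $U_h$ is the slice of $\Hrp$ along the $h$-th coordinate axis and $W_h$ is a cyclotomic class $-\Ct_0$ sitting on that axis.

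That reduction converts part (ii) into two concrete subproblems that your proposal does not supply. First, one must compute $\max\{|Y\cap\Ct_i|\}$ over $l$-dimensional $\Fq$-subspaces $Y\subseteq\FQ$: the answer is $q^l-1$ for $l\le\fm$ (take $Y\subseteq\ga^i\Fqm$) and $\frac{q^l-1}{2}+\frac{q^{\fm}-q^{l-\fm}}{2}$ for $l\ge\fm$ (a quadratic Gauss sum argument, choosing $Y^{\perp}$ inside a single cyclotomic class); this piecewise function $f$ is precisely what produces the two cases and the breakpoint at $r=(t-s-\frac12)m$. Second, one needs a rearrangement lemma showing that among all tuples $(u_1,\dots,u_t)$ with $\sum u_h=tm-r$ and $0\le u_h\le m$, the lexicographically largest one maximizes $\sum_h f(u_h)$; this uses that $f(l)-f(l-1)$ is increasing, i.e.\ a genuine convexity property of $f$, not just dimension counting. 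Your appeal to whether ``the quadratic form on $H$ is balanced'' gestures at the right phenomenon but does not yield either of these steps, and without the dualization the optimization over $H_r$ directly is not tractable.
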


\begin{theorem} \label{sec3:main2}
Let $\cC$ be a cyclic code defined as above with $e>t \ge 1$ and $N=1$. Suppose $\left\{\De_1 \pmod{e},\ldots, \De_t \pmod{e}\right\}$ is an arithmetic progression. Let $d_r:=d_r(\cC)$ be the $r$-th GHW of $\cC$. Then
\begin{equation*}
d_r=\begin{cases}
      \frac{q^m-1}{\de}(1-\frac{t-1}{e})-\frac{e-t+1}{e\de}(q^{m-r}-1) & \text{if $1 \le r \le m$},\\
      \frac{q^m-1}{\de}(1-\frac{s}{e})-\frac{q^{(t-s)m-r}-1}{e\de} & \text{if $(t-s-1)m < r \le(t-s)m$},
      \end{cases}
\end{equation*}
where $0 \le s \le t-2$.
\end{theorem}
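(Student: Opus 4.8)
The plan is to reduce each $d_r$, via the Gauss-period expression of Section~\ref{sec4} specialized to $N=1$, to a purely combinatorial extremal problem about $\Fq$-subspaces of $\FQ^{t}$, and then to solve that problem by exploiting that the arithmetic progression hypothesis turns the relevant line configuration into a Vandermonde (uniform-matroid) one.

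First I would set up the dictionary between subcodes and subspaces. A generic codeword of $\cC$ may be written $c(\ub)=\bigl(\sum_{i=1}^{t}\TQq(b_i\ga^{a_ij})\bigr)_{j=0}^{n-1}$ with $\ub=(b_1,\dots,b_t)\in\FQ^{t}$, and by Assumption~iii) this gives an $\Fq$-isomorphism $\FQ^{t}\to\cC$; thus an $r$-dimensional subcode $\cD$ is the image of a unique $r$-dimensional $\Fq$-subspace $H\subseteq\FQ^{t}$. A coordinate $j$ lies outside $\supp(\cD)$ exactly when $\TQq(\sum_i b_i\ga^{a_ij})=0$ for all $\ub\in H$, i.e.\ when $\ux_j:=(\ga^{a_1j},\dots,\ga^{a_tj})$ lies in the $\Fq$-dual $\Hp$ of $H$ for the nondegenerate form $(\ux,\uy)\mapsto\TQq(\sum_i x_iy_i)$, where $\dimq\Hp=tm-r$. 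Hence $|\supp(\cD)|=n-\#\{0\le j\le n-1:\ux_j\in\Hp\}$. Putting $g=\ga^{(Q-1)/e}$ and $\mathbf v(\rho)=(g^{\De_1\rho},\dots,g^{\De_t\rho})$, we have $\ux_j=\ga^{aj}\mathbf v(j\bmod e)$, and since $e\de\mid N(q-1)=q-1$ by \eqref{con1} one checks $e\mid n$. When $N=1$ the Gauss periods in the Section~\ref{sec4} formula are all trivial, so the restricted count over each residue class mod $e$ is exact; the outcome is
\[
\#\{j:\ux_j\in\Hp\}=\frac{1}{e\de}\sum_{\rho=0}^{e-1}\bigl(q^{d_\rho(H)}-1\bigr),\qquad d_\rho(H):=\dimq(\Hp\cap L_\rho),
\]
where $L_\rho:=\FQ\cdot\mathbf v(\rho)$ (with $Q=q^{m}$, the value $d_\rho=m$ correctly reproduces the full contribution $n/e$). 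Consequently
\[
d_r=\frac{Q-1}{\de}-\frac{1}{e\de}\Bigl(\max_{H}\sum_{\rho=0}^{e-1}q^{d_\rho(H)}-e\Bigr),
\]
the maximum being over $\Fq$-subspaces $\Hp\subseteq\FQ^{t}$ of $\Fq$-dimension $tm-r$, so the whole problem is now to evaluate this maximum.

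Next I would analyze the extremal problem. Writing $\De_i\equiv\De_1+(i-1)\ell\pmod e$, the hypothesis $\gcd(\De_2-\De_1,\dots,\De_t-\De_1,e)=1$ reads $\gcd(\ell,e)=1$, so $\{g^{\ell\rho}\}_{\rho=0}^{e-1}$ is the full group of $e$th roots of unity in $\FQ$ and $\mathbf v(\rho)$ is $g^{\De_1\rho}$ times the moment-curve vector $(1,y,\dots,y^{t-1})$ with $y=g^{\ell\rho}$; since $e\mid q-1$ these roots of unity are distinct, so \emph{any $t$ of the vectors $\mathbf v(\rho)$ are $\FQ$-linearly independent}. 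Hence, for any set $T$ of residues with $|T|\le t$, the sum $\bigoplus_{\rho\in T}L_\rho$ is direct, which yields $\sum_{\rho\in T}d_\rho(H)\le\min(tm-r,\,|T|m)$; in particular $\Hp$ contains at most $s$ of the lines $L_\rho$ whenever $r>(t-s-1)m$. Using convexity of $q^{x}$ against these constraints, one shows $\sum_\rho q^{d_\rho(H)}$ is maximized by a ``staircase'' subspace. If $s\le t-2$, take $\Hp=L_{\rho_1}\oplus\cdots\oplus L_{\rho_s}\oplus U$ with $U\subset L_{\rho_{s+1}}$ of dimension $(t-s)m-r$; because $s+2\le t$, every other $L_\rho$ meets this subspace only in $0$, so $\sum_\rho q^{d_\rho}=s q^{m}+q^{(t-s)m-r}+(e-s-1)$, which rearranges to the $s\le t-2$ formula. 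If $r\le m$ (so $s=t-1$), the remaining lines can no longer be isolated: taking $\Hp$ to contain $t-1$ full lines spanning an $\FQ$-hyperplane $V$ together with an $(m-r)$-dimensional subspace of $\FQ^{t}/V$ forces every one of the other $e-t+1$ lines to satisfy $d_\rho(H)=m-r$ (each maps isomorphically onto $\FQ^{t}/V$), so $\sum_\rho q^{d_\rho}=(t-1)q^{m}+(e-t+1)q^{m-r}$, which rearranges to the first displayed formula; the coefficient $e-t+1$ is precisely this unavoidable spillover. Finally I would check that $r=m$ patches the two cases and that the two ranges exhaust $1\le r\le tm$.

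The hard part will be the \emph{upper} bound in the extremal problem: constructing the staircase subspaces is routine, but one must show that \emph{no} $\Fq$-subspace $\Hp$ of dimension $tm-r$ can beat the claimed value of $\sum_\rho q^{d_\rho(H)}$. The direct-sum inequalities alone do not suffice when $e>t$, since there are far more lines than the rank $t$; the argument has to classify $\Hp$ by the number $k\le s$ of lines it contains fully, bound the contribution of the partially-met lines by combining the inequalities for $|T|\le t$ with the convexity of $q^{x}$, and rule out that too many moderately-large intersections can coexist. The interplay between the multiplicative part (the radial factor $\ga^{aj}$, which is why $N=1$ is needed to keep the count exact) and the additive part (the dimensions $d_\rho$) is what makes the bookkeeping delicate, and the arithmetic progression hypothesis is indispensable — without it the line configuration, and hence the answer, changes. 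I would also need to verify carefully that the Section~\ref{sec4} formula genuinely collapses to the exact count above when $N=1$, in particular that the multiplicities of the values $\ga^{aj}$ within each residue class mod $e$ are the same for all classes.
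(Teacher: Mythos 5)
Your setup is sound and is precisely the dual of the paper's: your lines $L_\rho=\FQ\cdot\mathbf{v}(\rho)$ are the trace-duals of the hyperplanes $W_h=\{\ub:\sum_j b_j\be_j^h=0\}$ that the paper intersects with $H_r$, and the two dimension parameters are related by $\dimq(\Hp\cap L_h)=\dimq(H_r\cap W_h)+m-r$, so your extremal problem for $\sum_\rho q^{d_\rho}$ is the paper's problem for $\sum_h q^{w_h}$ up to the harmless factor $q^{m-r}$. The counting identity you flag for verification is indeed correct (for $N=1$ the multiset $\{x\ga^{aj}: x\in\Fq^*,\ j\equiv\rho \bmod e\}$ is $\tfrac{q-1}{e\de}$ copies of $\FQ^*$, and $\Hp\cap L_\rho$ is $\Fq^*$-invariant). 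Your Vandermonde observation --- any $t$ of the $\mathbf{v}(\rho)$ are $\FQ$-independent because the AP hypothesis together with $\gcd(\De_2-\De_1,\ldots,e)=1$ places them on a moment curve at $e$ distinct $e$-th roots of unity --- is exactly the dual of the fact the paper imports from the reference [YXDL], namely that all coefficients $\la_{h,i}$ are nonzero; and your two staircase constructions give the correct extremal configurations and reproduce both closed forms.

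The genuine gap is the optimality (upper-bound) half, which you yourself call ``the hard part'' and do not carry out. The program you sketch --- classify $\Hp$ by the number of fully contained lines, play the direct-sum inequalities against convexity, and ``rule out that too many moderately-large intersections can coexist'' --- is both heavier than necessary and not obviously sufficient as stated: convexity of $q^x$ only locates the maximum at a vertex of the polytope cut out by your inequalities, and comparing all its integer vertices is precisely the case analysis you have not done. The missing idea is much simpler, and it is where $N=1$ earns its keep a second time: by (\ref{con1}), $N=1$ forces $e\de\mid q-1$, hence $e\le q-1$, and then a sum $\sum_{h=1}^e q^{c_h}$ over sorted exponents is maximized by lexicographically maximizing the sorted tuple, since if two sorted tuples first differ at position $i$ with $c_i>c_i'$ then $\sum_{h\ge i}q^{c_h'}\le(e-i+1)q^{c_i-1}\le(q-1)q^{c_i-1}<q^{c_i}\le\sum_{h\ge i}q^{c_h}$. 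One therefore greedily maximizes $d_{(1)}$, then $d_{(2)}$, and so on; your own constraints $\sum_{\rho\in T}d_\rho\le tm-r$ for $|T|\le t$ and $d_\rho\le m$ then force the lex-maximal tuple to be exactly your staircase (in particular $d_{(s+2)}=\cdots=d_{(e)}=0$ when $s\le t-2$, and $d_{(t)}=\cdots=d_{(e)}=m-r$ when $s=t-1$), which your constructions realize. With that single observation supplied your argument closes; without it the upper bound, and hence the theorem, remains unproved.
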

We remark that when $t=1$, Theorems \ref{sec3:main1} and \ref{sec3:main2} reduce to Corollary 3.2 and Theorem 4.1 of \cite{YLFL} respectively. When $t \ge 2$, the results are new.

It looks likely that the weight hierarchy of $\cC$ can still be determined when $e>t \ge 1$ and $N=2$, however, for a complete answer, the computation becomes quite complicated. We may come back to this problem in the future.

\section{Preliminaries}\label{sec3}

In this section, we introduce some preliminary results which will be used later in the paper.

\subsection{Cyclic Codes}

Let $\cC$ be an $[n,k]$ linear code over $\Fq$. Assume that $(n,q)=1$. $\cC$ is called cyclic, if in additional to be linear, $\cC$ satisfies the property that the cyclic shift  $(c_{n-1},c_0,\ldots,c_{n-2}) \in \cC$ whenever $(c_0,c_1,\ldots,c_{n-1}) \in \cC$. For such a cyclic code $\cC$, each codeword $c=(c_0,\ldots,c_{n-1})$ can be associated with a polynomial $\sum_{i=0}^{n-1} c_ix^i$ in the principal ideal ring $R_n:=\Fq[x]/(x^{n}-1)$. Under this correspondence, $\cC$ can be identified with an ideal of $R_n$. Hence, there is a unique polynomial $g(x) \in \Fq[x]$ with $g(x) \mid x^n-1$ such that $\cC=(g(x))R_n$. The $g(x)$ is called the {\em generator polynomial} of $\cC$, and $h(x)=\frac{x^n-1}{g(x)}$ is called the {\em parity check polynomial} of $\cC$. When $R_n$ is specified, a cyclic code is uniquely determined by either the generator polynomial or parity check polynomial. $\cC$ is said to have $i$ {\em zeroes} if its generator polynomial can be factorized into a product of $i$ irreducible polynomials over $\Fq$. When the dual code $\cC^{\perp}$ has $i$ zeroes, we call $\cC$ a cyclic code with $i$ {\em nonzeroes}.

\subsection{Group Characters, Gauss Sums and Gauss Periods}

Let $q=p^s$ where $p$ is a prime number. The {\it canonical additive character} $\psi_q$ of $\Fq$ is given by
\begin{align*}
  \psi_q: \Fq & \longrightarrow \bb{C} \\
         x  & \mapsto \zp^{\Tqp(x)},
\end{align*}
where $\zp=\exp\left(2 \pi \sqrt{-1} /p\right)$ is a primitive $p$-th root of unity of $\bb{C}$, and $\Tqp$ is the trace function from $\Fq$ to $\Fp$. If $Q$ is a power of $q$, by the transitivity of trace functions we have $\psi_Q=\psi_q \circ \TQq$.

Let $\chi: \Fq^*=\Fq \setminus \{0\} \to \bb{C}$ be a multiplicative character, that is, $\chi(xy)=\chi(x)\chi(y)$ for any $x,y \in \Fq^*$. We may extend the definition of $\chi$ to $\Fq$ by setting $\chi(0)=0$. The corresponding {\it Gauss sum} $G(\chi)$ is defined by
$$
G(\chi)=\sum_{x \in \Fq}\chi(x)\psi_q(x),
$$
where $\psi$ is the canonical additive character. If the order of $\chi$ equals two, $\chi$ is called the quadratic character of $\Fq$ and the corresponding Gauss sum the quadratic Gauss sum. The explicit values of quadratic Gauss sums are known and are recorded as follows.

\begin{lemma}{\rm \cite[Theorem 5.15]{LN}}\label{lemmags}
Let $q=p^s$ and $\chi$ be the quadratic character of $\Fq$ (hence $p$ is odd). Then
$$
G(\chi)=\begin{cases}
      (-1)^{s-1}\sqrt{q} & \text{if $p\equiv 1 \bmod 4$},\\
      (-1)^{s-1}\left(\sqrt{-1}\right)^s\sqrt{q} & \text{if $p\equiv 3 \bmod 4$}.
\end{cases}
$$
\end{lemma}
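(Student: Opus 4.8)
The plan is to reduce the evaluation to the prime field $\Fp$ and then ascend to $\Fq=\F_{p^s}$ by the Davenport--Hasse lifting relation. First I would record two elementary identities valid for any nontrivial multiplicative character $\chi$: the conjugation formula $\overline{G(\chi)}=\chi(-1)\,G(\overline{\chi})$ and the absolute value $G(\chi)\,\overline{G(\chi)}=q$. For the quadratic character one has $\overline{\chi}=\chi$, so these combine to give $G(\chi)^2=\chi(-1)\,q$. Specializing to $\Fp$, this already pins $G(\chi)$ down to a sign: it equals $\pm\sqrt{p}$ when $p\equiv 1 \bmod 4$ (where $\chi(-1)=1$) and $\pm\sqrt{-1}\,\sqrt{p}$ when $p\equiv 3 \bmod 4$ (where $\chi(-1)=-1$). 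Moreover, over $\Fp$ the sum rewrites as the classical quadratic Gauss sum $G(\chi)=\sum_{x\in\Fp}\zp^{x^2}$, since collecting terms by $x^2$ turns the character sum into a count of square roots.

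The crux is to remove the sign ambiguity in the prime case, and this is exactly Gauss's celebrated sign determination; I expect the entire difficulty of the lemma to be concentrated here, because squaring irretrievably destroys the sign. The goal is to show the sign is always $+$, i.e. $G(\chi)=\sqrt{p}$ for $p\equiv 1 \bmod 4$ and $G(\chi)=\sqrt{-1}\,\sqrt{p}$ for $p\equiv 3 \bmod 4$. One clean route is the discriminant/Vandermonde argument: express the quadratic Gauss sum through the Vandermonde-type product $\prod_{0\le j<k\le p-1}\bigl(\zp^{k}-\zp^{j}\bigr)$ over the $p$-th roots of unity and track the argument of this product to force the sign to be positive. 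An alternative I would keep in reserve is Schur's finite-Fourier-transform argument: the matrix $M=\bigl(\zp^{jk}\bigr)_{0\le j,k\le p-1}$ satisfies $M^{4}=p^{2}I$, so its eigenvalues lie in $\sqrt{p}\cdot\{1,-1,\sqrt{-1},-\sqrt{-1}\}$; since $\tr M=\sum_{j}\zp^{j^{2}}=G(\chi)$, computing the eigenvalue multiplicities (which depend only on $p\bmod 4$) reads off the trace and hence the sign. Either route settles the base case.

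With the prime-field value in hand, the ascent is formal. I would invoke the Davenport--Hasse relation for the lift of a character through the norm: the quadratic character $\chi'$ of $\Fq$ is the lift $\chi'=\chi\circ N_{\Fq/\Fp}$ of the quadratic character $\chi$ of $\Fp$, because $N_{\Fq/\Fp}$ is surjective onto $\Fp^{*}$ (so $\chi\circ N$ is nontrivial of order exactly two, hence equals the unique quadratic character of the cyclic group $\Fq^{*}$). The relation then reads $-G(\chi')=\bigl(-G(\chi)\bigr)^{s}$, that is $G(\chi')=(-1)^{s-1}G(\chi)^{s}$. Substituting the two base-case values yields precisely the claimed formula: when $p\equiv 1 \bmod 4$, $G(\chi')=(-1)^{s-1}(\sqrt{p})^{s}=(-1)^{s-1}\sqrt{q}$; when $p\equiv 3 \bmod 4$, $G(\chi')=(-1)^{s-1}\bigl(\sqrt{-1}\,\sqrt{p}\bigr)^{s}=(-1)^{s-1}\bigl(\sqrt{-1}\bigr)^{s}\sqrt{q}$. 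Thus, apart from the Gauss sign established in the base case, the remaining steps are bookkeeping, and the main obstacle is the prime-field sign determination alone.
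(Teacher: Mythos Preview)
The paper does not prove this lemma at all; it is simply quoted from Lidl--Niederreiter \cite[Theorem~5.15]{LN} as a known result, so there is no ``paper's own proof'' to compare against. Your outline is correct and is in fact the classical route taken in that reference: establish $G(\chi)^2=\chi(-1)q$, resolve the sign over $\Fp$ (Gauss's theorem), and then lift via Davenport--Hasse $-G(\chi')=(-G(\chi))^s$. Nothing more is needed here.
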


Let $\ga$ be a primitive element of $\Fq$. For $N \mid (q-1)$, denote by $\langle \ga^N \rangle$ the multiplicative subgroup generated by $\ga^N$. Then for any $0 \le i \le N-1$, $\CNq_i=\ga^i \langle \ga^N \rangle$ is called the $i$-th cyclotomy class of $\Fq$. For any $a \in \Fq$, we define the {\it Gauss period} $\etaNq_a$ as
$$
\etaNq_a=\sum_{x \in \CNq_0}\psi_q(ax).
$$

\section{An expression of the GHWs} \label{sec4}

Let $\cC$ be the cyclic code defined in Introduction. By Delsarte's Theorem \cite{Del}, elements of $\cC$ can be represented uniquely by $c(\ux)=(c_i(\ux))_{i=1}^n$ where $\ux=(x_1,x_2,\ldots,x_t)$ runs over the set $\FQ^t$ and
$$
c_i(\ux)=\TQq\left(\sum_{j=1}^t x_j\ga^{a_ji}\right), \quad 1 \le i \le n.
$$
In other words, the map
\begin{align*}
  \Psi: \FQ^t & \longrightarrow \cC, \\
         \ux  & \mapsto c(\ux)
\end{align*}
is an isomorphism between two $\Fq$-vector spaces $\FQ^t$ and $\cC$, hence induces a 1-1 correspondence between $r$-dimensional $\Fq$-subspaces of $\FQ^t$ and $r$-dimensional subcodes of $\cC$ for any $1 \le r \le tm$. For any $\Fq$-vector space $M$, denote by $\left[{M \atop r}\right]$ the set of $r$-dimensional $\Fq$-subspaces of $M$.

For any $H_r \in \subsp$, define
$$
N(H_r)=|\{i : 1 \le i \le n, c_i(\ub)=0, \forall \ub \in H_r\}|,
$$
and for any $1 \le r \le tm$, define
$$
N_r=\max\left\{N(H_r) \mid H_r \in \subsp\right\}.
$$
Since $\Psi$ is an isomorphism, by definition, the $r$-th GHW of $\cC$ can be expressed as
$$
d_r:=d_r(\cC)=n-N_r.
$$
Let $\ul{\ep_1},\ldots,\ul{\ep_r}$ be an $\Fq$-basis of $H_r$. Then,
$$
c_i(\ub)=0, \forall \ub \in H_r \Longleftrightarrow c_i(\ul{\ep_j})=0, \forall 1 \le j \le r.
$$
As before we use $\psi_Q$ (resp. $\psi_q$) to denote the canonical additive character of $\FQ$ (resp. $\Fq$). By using orthogonality of $\phi_q$ we have
\begin{align*}
N(H_r)&=\sum_{i=1}^n \left\{\frac{1}{q} \sum_{x_1 \in \Fq} \psi_q(x_1c_i(\ul{\ep_1})) \right\} \cdots \left\{ \frac{1}{q} \sum_{x_r \in \Fq} \psi_q(x_rc_i(\ul{\ep_r})) \right\}\\
      &=\frac{1}{q^r}\sum_{i=1}^n \sum_{x_1,\ldots,x_r \in \Fq} \psi_q(c_i(x_1\ul{\ep_1}+\cdots+x_r\ul{\ep_r}))\\
      &=\frac{1}{q^r}\sum_{i=1}^n \sum_{\ub \in H_r}\psi_q(c_i(\ub))=\frac{n}{q^r}+\frac{1}{q^r}\sum_{i=1}^n \sum_{\ub \in H_r^*}\psi_q(c_i(\ub)),
\end{align*}
where $H_r^*=H_r \sm \zs$. Recall that $\ga^{a_j}=\ga^{a+\frac{Q-1}{e}\De_j}$. By setting $\be_j=\ga^{\frac{Q-1}{e}\De_j}$, we have $\ga^{a_j}=\ga^a\be_j$. Thus,
\begin{align*}
(q-1)N(H_r)&=\frac{n(q-1)}{q^r}+\frac{1}{q^r}\sum_{i=1}^n \sum_{\ub \in H_r^*}\sum_{x \in \Fq^*}\psi_q(c_i(x\ub))\\
&=\frac{n(q-1)}{q^r}+\frac{1}{q^r}\sum_{i=1}^n \sum_{\ub \in H_r^*}\sum_{x \in \Fq^*}\psi_q\left(x\TQq\left(\sum_{j=1}^tb_j\ga^{a_ji}\right)\right)\\
&=\frac{n(q-1)}{q^r}+\frac{1}{q^r}\sum_{\ub \in H_r^*}\sum_{i=1}^n\sum_{x \in \Fq^*}\psi_Q\left(x\ga^{ai}\sum_{j=1}^tb_j\be_j^{i}\right).
\end{align*}
Noting that $e \mid \frac{Q-1}{\de}=n$, any $1 \le i \le n$ can be expressed as $i=ej+h$ with $0 \le j \le \frac{n}{e}-1$ and $1 \le h \le e$. Thus, the second term on the right can be written as
$$
\frac{1}{q^r}\sum_{\ub \in H_r^*}\sum_{j=0}^{\frac{n}{e}-1}\sum_{h=1}^e\sum_{x \in \Fq^*}\psi_Q\left(x\ga^{aej}\ga^{ah}\sum_{j=1}^tb_j\be_j^{h}\right).
$$
Noting that $\Fq^*=\left\langle \ga^{\frac{Q-1}{q-1}} \right\rangle$ and $N=\gcd\left(\frac{Q-1}{q-1},ae\right)$, it is easy to check that
$$
\left\{x\ga^{aej} \mid x \in \Fq^*, 0 \le j \le \frac{n}{e}-1\right\}=\frac{(q-1)N}{e\de}*\CN_0,
$$
where we use $l*A$ to denote a multiset whose elements are those of the set $A$, each repeating $l$ times. Consequently, we have
\begin{align*}
& \frac{1}{q^r}\sum_{\ub \in H_r^*}\sum_{j=0}^{\frac{n}{e}-1}\sum_{h=1}^e\sum_{x \in \Fq^*}\psi_Q\left(x\ga^{aej}\ga^{ah}\sum_{j=1}^tb_j\be_j^{h}\right)\\
=& \frac{(q-1)N}{e\de q^r} \sum_{\ub \in H_r^*}\sum_{h=1}^e\sum_{y \in \CN_0}\psi_Q\left(y\ga^{ah}\sum_{j=1}^tb_j\be_j^{h}\right).
\end{align*}
Setting $g=\ga^a$, we have
$$
(q-1)N(H_r)=\frac{(q-1)n}{q^r}+\frac{(q-1)N}{e\de q^r} \sum_{\ub \in H_r^*}\sum_{h=1}^e\sum_{y \in \CN_0}\psi_Q\left(yg^{h}\sum_{j=1}^tb_j\be_j^{h}\right).
$$
That is,
\begin{equation}\label{eqn1}
N(H_r)=\frac{N}{e\de q^r} \sum_{\ub \in H_r}\sum_{h=1}^e \etaNg.
\end{equation}

Therefore, to compute the GHWs for the cyclic code $\cC$, it suffices to determine the maximal value of the above sum of the Gauss periods for all $H_r \in \subsp$. The problem is difficult in general. However, when the Gauss periods in the sum take only a few values, it is hopeful to determine the GHWs completely. Thus, the value $N$ is one of the key points in the computation and we will consider below the simplest cases where $N \in \{1,2\}$. We first deal with the case that $e=t$.

\section{Proof of (i) of Theorem \ref{sec3:main1}}\label{sec:p1}

Recall that $\be_j=\ga^{\frac{Q-1}{t}\De_j}$ for $1 \le j \le t$ and $g=\ga^a$. Define $\be=\ga^{\frac{Q-1}{t}}$. Since $\be_j^t=1$ for each $1 \le j \le t$ and $\be_j$'s are distinct, we can assume without loss of generality that $\be_j=\be^{j}$, $1 \le j \le t$. We may make a change of variables $\ub=(b_1,\ldots,b_t) \mapsto \uy=(y_1,\ldots,y_t)$ by $y_i=g^i\sum_{j=1}^tb_j\be^{ij},1 \le i \le t$. It is easy to see that this defines an $\Fq$-isomorphism $\phi: \FQ^t \longrightarrow \FQ^t$. Thus Equation (\ref{eqn1}) can be written as
\begin{equation*}
N(H_r)=\frac{N}{t\de q^r} \sum_{\uy \in \phi(H_r)}\sum_{h=1}^t \etaNy.
\end{equation*}
Define
\begin{equation*}
\tN(H_r)=\frac{N}{t\de q^r} \sum_{\uy \in H_r}\sum_{h=1}^t \etaNy.
\end{equation*}
Clearly,
$$
\max\left\{N(H_r) \mid H_r \in \subsp\right\}=\max\left\{\tN(H_r) \mid H_r \in \subsp\right\}.
$$
Hence, we can focus on the determination of
\begin{align}
N_r &=\max\left\{\tN(H_r) \mid H_r \in \subsp\right\} \notag\\
    &= \frac{N}{t\de q^r} \max\left\{F(H_r) \mid H_r \in \subsp\right\} \label{eqn2},
\end{align}
where
\begin{equation}\label{eqn3}
F(H_r)=\sum_{\uy \in H_r}\sum_{h=1}^t \etaNy.
\end{equation}

Now we are ready to prove (i) of Theorem \ref{sec3:main1}.
\subsection{Proof of (i) of Theorem \ref{sec3:main1}: $N=1$}
\begin{proof}
Since $N=1$, by $(\ref{con1})$, we have $e\de \mid (q-1)$ and hence $e=t \le q-1$. Recall that
$$
\etao_a=\begin{cases}
  Q-1 & \text{if $a=0$},\\
  -1  & \text{if $a \ne 0$}.
\end{cases}
$$
For any $H_r \in \subsp$ and any $1 \le h \le t$, define
$$
V_h=\underbrace{\FQ\times\cdots\times\FQ}_{h-1} \times \underset{h}{\zs}\times \underset{h+1}{\FQ} \times \ldots \times \underset{t}{\FQ}
$$
and
$$
v_h:=v_h(H_r)=\dim_{\Fq}(H_r \cap V_h).
$$
Then, by Equation (\ref{eqn3}),
\begin{align}
F(H_r)&=\sum_{h=1}^t (|H_r \cap V_h|(Q-1)-(q^r-|H_r \cap V_h|)) \notag \\
      &=\sum_{h=1}^tq^{v_h}((Q-1)-(q^r-q^{v_h}))  \notag \\
      &=Q\sum_{h=1}^tq^{v_h}-tq^r. \notag
\end{align}
Hence, for any $r$-dimensional subspace $H_r$, the tuple $(v_1(H_r),\ldots,v_t(H_r))$ completely determines $F(H_r)$.

To find $\max\left\{F(H_r): H_r \in \subsp\right\}$, we may assume that $v_1\ge v_2\ge\cdots\ge v_t$, since the ordering of $v_h$'s does not affect the value. We perform the following procedure successively: first we find $H_r$'s that maximize the value $v_1=v_1(H_r)$; once $v_1,\ldots,v_i$ are determined for $i \ge 1$, then among the $H_r$'s we find the ones that maximize $v_{i+1}$. Since $t \le q-1$, it is easy to see that this procedure will produce the desired $H_r$ which maximizes $F(H_r)$.

Now suppose $(t-s-1)m<r\le(t-s)m$ for some $s$ in the range $0 \le s \le t-1$. Noting that
$$
\dim_{\Fq}\left(\bigcap_{i=1}^s V_i\right)=(t-s)m,\quad \dim_{\Fq}\left(\bigcap_{i=1}^{s+1} V_i\right)=(t-s-1)m,
$$
we may take $v_1=\ldots=v_s=r$, which are obviously maximal. This means that $H_r \subset \bigcap_{i=1}^s V_i$. To make $v_{s+1}$ maximal, we shall take $\bigcap_{i=1}^{s+1} V_i \subset H_r \subset \bigcap_{i=1}^s V_i$, so that $v_{s+1}=(t-s-1)m$. Under this configuration, $H_r$ is of the form
$$
H_r=\underbrace{\zs\times\cdots\times\zs}_s \times \underset{s+1}{H}\times \underset{s+2}{\FQ} \times \ldots \times \underset{t}{\FQ},
$$
where $H \subset \FQ$ is any $\Fq$-vector space of dimension $r-(t-s-1)m$, and we find
\begin{equation*}
v_h=v_h(H_r)=\left\{\begin{array}{cl}
          r & 1 \le h \le s, \\
          (t-s-1)m & h=s+1, \\
          r-m & s+2 \le h \le t.
          \end{array}\right.
\end{equation*}
Therefore we obtain
$$
\max\left\{F(H_r) \mid H_r \in \subsp\right\}=Q\left(sq^r+q^{(t-s-1)m}+(t-s-1)q^{r-m}\right)-tq^r,
$$
and
\begin{align*}
d_r&=n-\frac{1}{t\de q^r}\max\left\{F(H_r) \mid H_r \in \subsp\right\} \\
   &=\frac{q^m-1}{\de}(1-\frac{s}{t})-\frac{q^{(t-s)m-r}-1}{t\de}.
\end{align*}
This completes the proof of (i) of Theorem \ref{sec3:main1}.
\end{proof}

\section{Proof of (ii) of Theorem \ref{sec3:main1}: $N=2$} \label{sec:p2}


For $t=e \ge 1$ and $N =2$, the situation is more difficult. We first adopt a new strategy that works for any $N \ge 2$. This strategy was inspired by \cite{YLFL}.

\subsection{A new strategy}

Let $\langle \cdot, \cdot \rangle: \FQ^t \times \FQ^t \to \Fq$ be the non-degenerate bilinear form given by
$$
\langle \ux ,\uy\rangle=\TQq\left(\sum_{i=1}^t x_iy_i\right),\quad \forall \, \ux=(x_1,\ldots,x_t), \uy=(y_1,\ldots,y_t) \in \FQ^t.
$$
Then for any $\Fq$-subspace $H$ of $\FQ^t$, we define
$$
\Hp=\{y \in \FQ^t \mid \lan \ux, \uy \ran=0, \forall \ux \in H \}.
$$
We have the following lemma.

\begin{lemma}
Suppose $H \in \subsp$. Then $\dim_{\Fq}\Hp =tm-r$ and
$$
\frac{1}{q^{tm-r}}\sum_{\uy \in \Hp} \psi_q \left(\lan \ux, \uy\ran\right)=\begin{cases}
1 & \text{if $\ux \in H$},\\
0 & \text{if $\ux \not\in H$}.
\end{cases}
$$
\end{lemma}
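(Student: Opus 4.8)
The statement is a standard fact about dual subspaces with respect to a non-degenerate bilinear form, combined with the orthogonality relations for additive characters, so the plan is to prove the two assertions separately and in that order.

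For the dimension count, first I would observe that the bilinear form $\lan\cdot,\cdot\ran$ on $\FQ^t$ is non-degenerate: indeed $\TQq$ is a non-degenerate $\Fq$-bilinear form on $\FQ$ (the trace form of the separable extension $\FQ/\Fq$), so the bilinear form $(x,y)\mapsto\sum_{i=1}^t x_iy_i$ followed by $\TQq$ is non-degenerate on $\FQ^t$ viewed as an $\Fq$-vector space of dimension $tm$. For any $\Fq$-subspace $H$ of dimension $r$, the map $\FQ^t\to\mathrm{Hom}_{\Fq}(H,\Fq)$ sending $\uy$ to the linear functional $\ux\mapsto\lan\ux,\uy\ran$ is surjective because the form is non-degenerate (a linear functional on $H$ extends to $\FQ^t$ and is then represented by some $\uy$). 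Its kernel is exactly $\Hp$, and $\dim_{\Fq}\mathrm{Hom}_{\Fq}(H,\Fq)=r$, so $\dim_{\Fq}\Hp=tm-r$ by rank–nullity. I would also note the standard consequence $(\Hp)^{\perp}=H$, which follows from the dimension formula applied twice together with the obvious inclusion $H\subseteq(\Hp)^{\perp}$.

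For the character sum, I would invoke the orthogonality relations for the canonical additive character $\psi_q$ of $\Fq$: for a fixed $\ux$, the map $\uy\mapsto\lan\ux,\uy\ran$ restricted to $\Hp$ is an $\Fq$-linear functional on the space $\Hp$ of dimension $tm-r$. If $\ux\in H=(\Hp)^{\perp}$, then $\lan\ux,\uy\ran=0$ for every $\uy\in\Hp$, so every summand equals $1$ and the sum is $q^{tm-r}$, giving the value $1$ after dividing. If $\ux\notin H=(\Hp)^{\perp}$, then the functional $\uy\mapsto\lan\ux,\uy\ran$ is not identically zero on $\Hp$, hence it is surjective onto $\Fq$ with each fibre of size $q^{tm-r-1}$; then $\sum_{\uy\in\Hp}\psi_q(\lan\ux,\uy\ran)=q^{tm-r-1}\sum_{c\in\Fq}\psi_q(c)=0$ since $\psi_q$ is a nontrivial character of the additive group $\Fq$. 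Dividing by $q^{tm-r}$ gives $0$, as claimed.

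There is no real obstacle here: the only point that needs care is the identification $(\Hp)^{\perp}=H$, which is what lets me split into the cases $\ux\in H$ and $\ux\notin H$ cleanly — this is exactly where non-degeneracy of the trace form (and hence of $\lan\cdot,\cdot\ran$) is used. I would state this identification explicitly as a sub-step rather than leaving it implicit, since the second assertion of the lemma genuinely relies on it. Everything else is the routine application of additive-character orthogonality already used earlier in Section \ref{sec4}.
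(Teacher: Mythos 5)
Your proof is correct and follows essentially the same route as the paper: the dimension count for $\Hp$ followed by the two-case evaluation of the character sum. The only differences are cosmetic — the paper establishes the vanishing when $\ux\notin H$ by the translation trick (picking $\ul{y_0}\in\Hp$ with $\psi_q(\lan\ux,\ul{y_0}\ran)\ne 1$ and noting $A\cdot\psi_q(\lan\ux,\ul{y_0}\ran)=A$ forces $A=0$) rather than your fibre count over $\Fq$, and it leaves the non-degeneracy argument behind $\dim_{\Fq}\Hp=tm-r$ and $(\Hp)^{\perp}=H$ implicit, whereas you rightly spell these out.
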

\begin{proof}
Let $A=\frac{1}{q^{tm-r}}\sum_{\uy \in \Hp} \psi_q (\lan \ux, \uy\ran)$. If $\ux \in H$, then $\lan \ux, \uy \ran=0$, $\forall \uy \in \Hp$. Hence, $A=1$. If $\ux \not\in H$, then there exists $\uy \in \Hp$ such that $\lan \ux, \uy \ran\ne 0$. In particular, there exists $\ul{y_0} \in \Hp$, such that $\psi_q(\lan \ux, \ul{y_0}\ran)\ne 1$. Thus,
\begin{align*}
A\cdot \psi_q(\lan \ux, \ul{y_0} \ran) &=\frac{1}{q^{tm-r}}\sum_{\uy \in \Hp} \psi_q(\lan \ux, \uy\ran+\lan \ux, \ul{y_0} \ran) \\
&= \frac{1}{q^{tm-r}}\sum_{\uy \in \Hp} \psi_q (\lan \ux, \uy+\ul{y_0}\ran)= A.
\end{align*}
Hence $A=0$. This completes the proof.
\end{proof}

By the above lemma, we can compute $F(H_r)$ in Equation (\ref{eqn3}) as follows.
\begin{align*}
F(H_r)&=\sum_{\uy \in H_r}\sum_{h=1}^t \etaNy =\sum_{h=1}^t\sum_{\uy \in \FQ^t} \etaNy \frac{1}{q^{tm-r}} \sum_{\ux \in \Hrp} \psi_q (\lan \ux, \uy \ran) \\
      &=\frac{1}{q^{tm-r}}\sum_{h=1}^t\sum_{\ux \in \Hrp}\sum_{z \in \CN_0}\sum_{\uy \in \FQ^t}\psi_Q(zy_h) \psi_q \left(\TQq\left(\sum_{i=1}^{t} x_iy_i\right)\right)\\
      &=\frac{1}{q^{tm-r}}\sum_{h=1}^t\sum_{\ux \in \Hrp}\sum_{z \in \CN_0}\sum_{y_1,\ldots,y_t \in \FQ}   \psi_Q \left(zy_h+\sum_{i=1}^{t} x_iy_i\right)\\
      &=q^r\sum_{h=1}^t\sum_{\ux \in \Hrp}\sum_{\substack{z \in \CN_0 \\ z+x_h=0 \\x_i=0, \forall \, i \ne h} }1.
\end{align*}

For any $H_r \in \subsp$ and $1\le h \le t$, define
$$
W_h:=W_h(N)=\underbrace{\zs \times \cdots \times \zs}_{h-1} \times (-\underset{h}{\CN_0}) \times \underset{h+1}{\zs} \times \cdots \underset{t}{\zs}
$$
and
$$
U_h:=U_h(H_r)=\Hrp \bigcap \biggl( \underbrace{\zs \times \cdots \times \zs}_{h-1} \times \underset{h}{\FQ} \times \underset{h+1}{\zs} \times \cdots \underset{t}{\zs} \biggr).
$$
Then we have
\begin{align}
F(H_r)    &=q^r\sum_{h=1}^t |\Hrp \cap W_h|=q^r\sum_{h=1}^t |U_h \cap W_h|.  \label{eqn6}
\end{align}

Therefore for $t=e \ge 1$ and $N \ge 2$, to compute the GHWs, it suffices to determine the maximal value of $\sum_{h=1}^t |U_h \cap W_h|$ for all $H_r \in \subsp$.

We remark that since $\Hrp=\bigoplus_{h=1}^t U_h$, the spaces $\Hrp$ and $H_r$ can be uniquely recovered from $(U_1,\ldots,U_t)$, where $U_h$ is a subspace of $\underbrace{\zs \times \cdots \times \zs}_{h-1} \times \underset{h}{\FQ} \times \underset{h+1}{\zs} \times \cdots \underset{t}{\zs}$. Setting $r'=tm-r$ and letting $u_h=\dimq(U_h)$, the $u_h$'s satisfy the condition that $\sum_{h=1}^t u_h=\rp$. So to find the maximal value of $\sum_{h=1}^t\left|U_h \cap W_h\right|$, we consider the following two steps:
\begin{itemize}
\item[1).] for each eligible tuple $(u_1,\ldots,u_h)$, determine the spaces $U_h$ of dimension $u_h$ such that $\left|U_h \cap W_h\right|$ is maximal for each $h$;

    \item[2).] consider all eligible tuples $(u_1,\ldots,u_h)$ and find the maximal.
\end{itemize}
Now we focus on the case that $N=2$.

\subsection{A working lemma}

Since $N=2$, $q$ is odd and $m$ is even. Note that $Q=q^m$. We need the following lemma.

\begin{lemma}\label{lemmaint}
Let $0 \le l \le m$ and $H \subset \FQ$ be an $l$-dimensional $\Fq$-subspace. Let $\ga$ be a primitive element of $\FQ$. Define a function
$$
f(l):=\begin{cases}
  q^l-1 & \text{if $0 \le l \le \frac{m}{2}$}, \\
  \frac{q^l-1}{2}+\frac{q^{\frac{m}{2}}-q^{l-\frac{m}{2}}}{2} & \text{if $\frac{m}{2} \le l \le m$}.
\end{cases}
$$
Then, for $0 \le l \le m$,
$$
\max\left\{\left|H \cap \Ct_i \right| : H \subset \FQ, \dimq(H)=l \right\}=f(l),
$$
where $i \in \{0,1\}$.
Furthermore, the subspace $H \subset \FQ$ that achieves the maximal value can be chosen as follows:
\begin{itemize}
\item[1)] If $0 \le l \le \fm$, then we can choose any $H \subset \ga^{i}\Fqm$;
\item[2)] If $\fm \le l \le m$, suppose $G(\chi)=(-1)^jq^{\fm}$ for some $j \in \{0,1\}$, where $\chi$ is the quadratic character of $\FQ$. Then we can choose $H$ satisfying $\Hp \subset \Ct_{i+j} \cup \zs$, here $\Hp$ is the orthogonal complement of $H$ with respect to the non-degenerate bilinear form $\langle \cdot,\cdot \rangle: \FQ \times \FQ \to \Fq$ given by
    $$
    \langle x,y \rangle= \TQq\left(xy\right), \forall x,y \in \FQ.
    $$
\end{itemize}
\end{lemma}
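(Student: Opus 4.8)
\noindent\emph{Proof idea.} The plan is to turn the count $\left|H\cap\Ct_i\right|$ into a quadratic character sum over $H$, to dispatch the range $0\le l\le\fm$ by a trivial bound plus an explicit subspace, and then to treat $\fm\le l\le m$ via a Fourier/reciprocity identity expressing the character sum over $H$ through the one over $\Hp$, whose dimension $m-l$ has fallen into the range already settled.

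Let $\chi$ be the quadratic character of $\FQ$ and set $S(H)=\sum_{x\in H}\chi(x)$; since $\chi(0)=0$ this is a sum over $H\sm\zs$, and because $\Ct_0$ and $\Ct_1$ are the sets of nonzero squares and of nonsquares, $\chi\equiv(-1)^k$ on $\Ct_k$, so that
\begin{equation*}
\left|H\cap\Ct_i\right|=\frac{q^l-1}{2}+\frac{(-1)^i}{2}\,S(H).
\end{equation*}
Hence it suffices to maximize $(-1)^iS(H)$ over $l$-dimensional $H$. For $0\le l\le\fm$ one argues directly: obviously $|S(H)|\le q^l-1$, with equality exactly when $\chi$ is constant on $H\sm\zs$, i.e. $H\sm\zs\subset\Ct_k$ for some $k$; such $H$ exist because $q$ is odd, so $\frac{Q-1}{q^{\fm}-1}=q^{\fm}+1$ is even and thus $\Fqm^*=\lan\ga^{q^{\fm}+1}\ran\subset\lan\ga^2\ran=\Ct_0$, whence $\ga^i\Fqm$ is an $\Fq$-subspace of dimension $\fm$ all of whose nonzero elements lie in $\Ct_i$, and any $l$-dimensional $H\subset\ga^i\Fqm$ gives $S(H)=(-1)^i(q^l-1)$. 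This yields $f(l)=q^l-1$ on this range together with item 1), and also shows that over the subspaces of any fixed dimension $d\le\fm$ one can realize $S(\cdot)=(-1)^k(q^d-1)$ for either $k\in\{0,1\}$.

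For $\fm\le l\le m$ the crux is the identity
\begin{equation*}
S(H)=\frac{q^l}{G(\chi)}\,S\!\left(\Hp\right),
\end{equation*}
with $\Hp$ taken for the form $\langle x,y\rangle=\TQq(xy)$. I would prove it by substituting $\chi(x)=\frac{1}{G(\chi)}\sum_{y\in\FQ}\chi(y)\psi_Q(xy)$ (valid for all $x$, since $\chi$ is quadratic and hence real-valued), interchanging the summations, and using $\sum_{x\in H}\psi_Q(xy)=\sum_{x\in H}\psi_q(\TQq(xy))$, which equals $q^l$ if $y\in\Hp$ and $0$ otherwise. With $G(\chi)=(-1)^jq^{\fm}$ this reads $S(H)=(-1)^jq^{l-\fm}S(\Hp)$. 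Since $\dimq\Hp=m-l\le\fm$ and $H\mapsto\Hp$ is a bijection onto the $(m-l)$-dimensional subspaces, the previous paragraph gives $\max|S(\Hp)|=q^{m-l}-1$, and therefore $\max|S(H)|=q^{l-\fm}(q^{m-l}-1)=q^{\fm}-q^{l-\fm}$. To reach this with the correct sign, pick a subspace $L\subset\ga^{i+j}\Fqm$ with $\dimq L=m-l$ and set $H=L^{\perp}$, so $\Hp=L$ and $S(\Hp)=(-1)^{i+j}(q^{m-l}-1)$; then $(-1)^iS(H)=q^{\fm}-q^{l-\fm}$, which is the claimed maximum and gives the $H$ of item 2). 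As $\frac{q^l-1}{2}+\frac{1}{2}(q^{\fm}-q^{l-\fm})=f(l)$ on $\fm\le l\le m$ and the two branches of $f$ agree at $l=\fm$, the formula holds uniformly for $0\le l\le m$.

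The main obstacle, beyond setting up the reciprocity identity, is the sign bookkeeping: the sign $(-1)^j$ produced by $G(\chi)$ must be carried through $S(H)=q^lG(\chi)^{-1}S(\Hp)$ and matched against the cyclotomic class $\Ct_{i+j}$ chosen for $\Hp$, so that the extremal $H$ makes $(-1)^iS(H)$ positive rather than negative. Granting the identity, the remaining ingredients are just the elementary bound $|S(H)|\le q^l-1$ and the explicit extremal subspaces $H\subset\ga^i\Fqm$ (for $l\le\fm$) and $\Hp\subset\ga^{i+j}\Fqm$ (for $l\ge\fm$).
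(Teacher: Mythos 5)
Your proof is correct and follows essentially the same route as the paper: the quadratic-character decomposition of $|H\cap\Ct_i|$, the subfield $\Fqm$ trick for $l\le\fm$, and the Gauss-sum duality $S(H)=q^lG(\chi)^{-1}S(\Hp)$ (equivalent to the paper's $\frac{G(\chi)}{q^{m-l}}S(\Hp)$ since $G(\chi)^2=Q$) for $l\ge\fm$. If anything, your bijection argument $H\leftrightarrow\Hp$ justifying that the extremal choice attains the global maximum of $|S(H)|$ is spelled out more fully than the paper's ``this is clearly the largest value.''
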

\begin{proof}
We will only prove the lemma for the case $i=0$. The case of $i=1$ is analogous.

If $0 \le l \le \fm$, then $\Fqm^* \subset \Ct_0$. Note that $\dimq\left(\Fqm\right)=\frac{m}{2}$. For any $H \subset \Fqm$, we have $|H \cap \Ct_0|=q^l-1$, which is clearly the maximal value.

If $\fm \le l \le m$,  then
\begin{align*}
|H \cap \Ct_0|&=\sum_{a \in H \sm \zs} \frac12 (1+\chi(a))
=\frac{q^l-1}{2}+\frac12\sum_{a \in H}\chi(a).
\end{align*}
Note that
\begin{align*}
\sum_{a \in H} \chi(a)&=\sum_{a \in \FQ}\frac{\chi(a)}{q^{m-l}}\sum_{b \in \Hp} \psi_q(\langle a,b \rangle) \\
                      &=\frac{1}{q^{m-l}}\sum_{b \in \Hp} \chi(b) \sum_{a \in \FQ} \chi(ab)\psi_q(\langle a,b \rangle)\\
                      &=\frac{G(\chi)}{q^{m-l}} \sum_{b \in \Hp}\chi(b).
\end{align*}
Since $\dimq(\Hp)=m-l \le \fm$, we can choose $\Hp$ satisfying $\Hp \subset \Ct_j \cup \zs$. Consequently, we have $\chi(b)=(-1)^j$, $\forall b \in \Hp \sm \zs$. Hence,
$$\sum_{a \in H} \chi(a)=\frac{q^{\fm}}{q^{m-l}}\sum_{b \in \Hp \sm\zs} 1=q^{\fm}-q^{l-\fm}.$$
This is clearly the largest value of $\sum_{a \in H} \chi(a)$. Therefore,
$$
\left|H \cap \Ct_0\right|=\frac{q^l-1}{2}+\frac{q^{\frac{m}{2}}-q^{l-\frac{m}{2}}}{2}
=f(l)$$
is the maximal value when $\fm \le l \le m$. This finishes the proof of Lemma \ref{lemmaint}.
\end{proof}

\subsection{Proof of (ii) of Theorem \ref{sec3:main1}}
We can now start the proof of (ii) of Theorem \ref{sec3:main1}. Recall that $H_r$ can be recovered from $(U_1,\ldots,U_t)$ where $U_h$'s are subspaces of dimension $u_h$ such that $\sum_{h=1}^t u_h=\rp=tm-r$. For any eligible fixed tuple $(u_1,\ldots,u_t)$, Lemma~\ref{lemmaint} tells us how to choose $(U_1,\ldots,U_t)$ so that $|U_h \cap W_h|$ attains the maximal value $f(u_h)$ for each $1 \le h \le t$. Hence it remains to determine which tuple $(u_1,\ldots,u_t)$ will give the maximal value $\sum_{h=1}^t f(u_h)$, here $(u_1,\ldots,u_t)$ satisfies the condition that $0 \le u_h \le m, \forall h$ and $\sum_{h=1}^t u_h = \rp$. We may assume without loss of generality that $u_1 \ge \cdots \ge u_t$.

The following lemma is useful to determine the maximal value of $\sum_{h=1}^t f(u_h)$. We define a finite set
$\cL=\{(l_1,\ldots,l_t) \mid 0 \le l_1,\ldots,l_t \le m, l_1 \ge l_2 \ge \cdots \ge l_t\}$ and for each $0 \le s \le tm$ a subset $\Ls=\{(l_1,\ldots,l_t) \in \cL \mid \sum_{i=1}^tl_i=s\}$. We can define a partial order on $\Ls$ as follows: for any $\ul{l}=(l_1,\ldots,l_t), \ul{l^{\pr}}=(l^{\pr}_1,\ldots,l^{\pr}_t) \in \Ls$, we say $\ul{l} \su \ul{l^{\pr}}$ if there is an index $i$ $(1 \le i \le t)$, such that $l_j=l^{\pr}_j$, $\forall 1 \le j \le i-1$ and $l_i>l^{\pr}_i$. It is easy to see that $\su$ gives a total order on $\Ls$.

\begin{lemma}\label{lemmaord}
For any $\ul{l}, \ul{l^{\pr}} \in \Ls$, if $\ul{l} \su \ul{l^{\pr}}$, then $\sum_{h=1}^tf(l_h) \ge \sum_{h=1}^tf(l^{\pr}_h)$.
\end{lemma}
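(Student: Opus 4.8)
The plan is to show that the total order $\su$ on $\Ls$ is compatible with the functional $\Phi(\ul{l}) := \sum_{h=1}^t f(l_h)$, i.e. that moving up in the $\su$-order never decreases $\Phi$. Since $\su$ is generated by its covering relations, it suffices to prove the statement for a single elementary step: if $\ul{l^{\pr}} \in \Ls$ and $\ul{l} \in \Ls$ differ in such a way that $\ul{l} \su \ul{l^{\pr}}$ is a covering relation, then $\Phi(\ul{l}) \ge \Phi(\ul{l^{\pr}})$. Concretely, a covering step fixes a prefix $l_1,\ldots,l_{i-1}$, has $l_i > l^{\pr}_i$, and compensates in the later coordinates; because both tuples lie in $\Ls$ (same sum) and both are non-increasing, the cleanest reduction is to the case where the two tuples differ in exactly two coordinates, say positions $i < k$, with $l_i = l^{\pr}_i + 1$ and $l_k = l^{\pr}_k - 1$ (a single unit of "mass" moved from a smaller coordinate up to a larger one), while all monotonicity constraints $0 \le l_h \le m$ and $l_1 \ge \cdots \ge l_t$ are preserved. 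So the whole lemma comes down to the one-variable inequality
\[
f(a+1) + f(b-1) \ge f(a) + f(b), \qquad \text{i.e.} \qquad f(a+1) - f(a) \ge f(b) - f(b-1),
\]
whenever $a \ge b$ (so that $a+1 > b-1$ and the displaced tuple stays monotone). In words: the forward difference $\Delta f(x) := f(x+1) - f(x)$ is non-increasing in $x$, i.e. $f$ is \emph{concave} on the integer lattice $\{0,1,\ldots,m\}$.

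The first step, then, is to reduce the general covering relation to this two-coordinate move. I would argue that if $\ul l \su \ul{l^\pr}$ with the first difference at index $i$, one can interpolate by a sequence of tuples in $\Ls$, each obtained from the previous by moving one unit from some coordinate to a coordinate with index $\le$ it (hence with value $\ge$ it, by the non-increasing property and the fact that we never violate monotonicity), so the problem telescopes. The second step is to verify concavity of $f$ by an explicit computation of $\Delta f(x)$ from the piecewise definition:
\[
\Delta f(x) = \begin{cases}
(q-1)q^{x} & 0 \le x < \tfrac m2,\\[2pt]
\tfrac12(q-1)q^{x} - \tfrac12(q-1)q^{x-\frac m2}\cdot q^{-1}\cdot(\text{adjust}) & x = \text{boundary / } x \ge \tfrac m2,
\end{cases}
\]
(the exact constants I would simply compute) — the point being that on $[0,\tfrac m2)$ the difference is $(q-1)q^x$, which is increasing, so one must be a little careful: $f$ is \emph{not} globally concave! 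Indeed $\Delta f$ grows on the first half. So the naive concavity claim is false, and the lemma must be using the structure of $\Ls$ more delicately.

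The main obstacle, and where the real work lies, is therefore exactly this: $f$ is convex-ish on $[0,\tfrac m2]$ and concave-ish on $[\tfrac m2, m]$, so the two-coordinate exchange inequality $f(a+1)+f(b-1)\ge f(a)+f(b)$ can fail for individual pairs $(a,b)$ with $a\ge b$ when both sit in the lower half. The resolution I would pursue is that in a covering step within $\Ls$ with $\ul l \su \ul{l^\pr}$, the relevant exchanges are constrained: because the tuples are sorted and we push mass toward the front, the coordinate we increase is always the largest one available, and one should compare not single units but the aggregate effect, using that $\sum f(u_h)$ with the sum of the $u_h$ fixed is maximized by making the sorted tuple as "front-loaded" (lexicographically large) as possible — which is precisely the $\su$-maximal element. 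So the correct framing is a global rearrangement/majorization argument: among all $\ul l \in \Ls$, $\Phi$ is Schur-convex-like with respect to $\su$, and this can be proven by showing directly that the $\su$-largest tuple in $\Ls$ (which has as many coordinates equal to $m$ as possible, then one partial coordinate, then zeros) achieves the maximum of $\Phi$, handling the two regimes $s \le$ (enough to fill halves) versus $s >$ separately and invoking Lemma~\ref{lemmaint}'s shape of $f$ in each. I would expect the bookkeeping for the "partial coordinate straddling $\tfrac m2$" case to be the fussiest part, but fundamentally routine once the majorization principle is set up.
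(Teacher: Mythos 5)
Your overall plan --- reduce the comparison of $\su$-comparable tuples in $\Ls$ to a chain of elementary moves transferring one unit from a smaller coordinate $b$ to a larger coordinate $a\ge b$, and then prove the single exchange inequality $f(a+1)+f(b-1)\ge f(a)+f(b)$ --- is exactly the paper's strategy. But you then make a direction error that derails the argument. The inequality $f(a+1)-f(a)\ge f(b)-f(b-1)$ for $a\ge b$ compares the forward difference at the \emph{larger} argument against the one at the \emph{smaller} argument (since $a+1>b$), so it asks that $g(l):=f(l)-f(l-1)$ be non-\emph{decreasing}, i.e.\ that $f$ be discretely \emph{convex} --- not concave as you state. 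Consequently your observation that the difference grows on the lower half is not an obstruction; it is precisely the property you need. Indeed $g(l)=(q-1)q^{l-1}$ for $1\le l\le \fm$ and $g(l)=\frac{q-1}{2}\left(q^{l-1}-q^{l-1-\fm}\right)$ for $\fm< l\le m$, each piece increasing in $l$, and across the seam $g\left(\fm+1\right)-g\left(\fm\right)=\frac{q-1}{2}\left(q^{\fm}-1-2q^{\fm-1}\right)\ge 0$ because $q\ge 3$ (recall $q$ is odd since $N=2$). Hence $g$ is increasing on all of $\{1,\ldots,m\}$, the exchange inequality holds for every pair $a\ge b\ge 1$, and the lemma follows by composing the elementary moves, exactly as in the paper. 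Your claim that the two-coordinate exchange ``can fail when both coordinates sit in the lower half'' is false --- for instance $a=b\le\fm$ gives $q^{a+1}+q^{a-1}\ge 2q^{a}$, i.e.\ $q+q^{-1}\ge 2$ --- and the majorization/Schur-convexity detour you sketch in its place is never actually carried out, so as written the proof is incomplete. The repair is one line: replace ``concave'' by ``convex'' and verify that $g$ is increasing.
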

\begin{proof}
For any $\ul{l}=(l_1,\ldots,l_t) \in \Ls$, define $f(\ul{l})=\sum_{h=1}^t f(l_h)$. For simplicity define $l_0:=m$. Suppose $l_i<l_{i-1}$ and $l_j \ge 1$ for some $i,j$ such that $1\le i<j\le t$. Then we can define an operation $S_{i,j}$ on $\ul{l}$ as follows:
$$
S_{i,j}(\ul{l})=(l_1,\ldots,l_{i-1},l_{i}+1,l_{i+1},\ldots,l_{j-1},l_{j}-1,l_{j+1},\ldots,l_t).
$$
Clearly $S_{i,j}(\ul{l}) \in \Ls$ and $S_{i,j}(\ul{l}) \su \ul{l}$. We claim that $f\left(S_{i,j}(\ul{l})\right) \ge f(\ul{l})$.

It suffices to prove that
\begin{eqnarray} f(l_i+1)-f(l_i) \ge f(l_j)-f(l_j-1), \quad l_i \ge l_j \ge 1. \label{eqn77}
\end{eqnarray}
For $1 \le l \le m$, define a function $g(l)=f(l)-f(l-1)$. A routine computation shows easily that $g$ is an increasing function. Therefore, the above inequality (\ref{eqn77}) holds. Hence the claim is proved.

Finally, Lemma \ref{lemmaord} is proved by realizing that for any $\ul{l} \su \ul{l^{\pr}}$, $\ul{l}$ can be obtained from $\ul{l^{\pr}}$ by a series of operations $S_{i,j}$ for some $i,j$'s.
\end{proof}

Armed with Lemma \ref{lemmaord}, we can prove (ii) of Theorem \ref{sec3:main1} as follows: suppose $(t-s-1)m<r \le (t-s)m$ for some $0 \le s \le t-1$, then $sm \le \rp =tm-r<(s+1)m$. We can choose an $H_r \in \subsp$, such that
$$
\Hrp=\underbrace{\FQ \times \cdots\times \FQ}_{s}\times \underset{s+1}{T} \times \underset{s+2}{\zs}\times\cdots\times\underset{t}{\zs},
$$
where $T$ is an $(\rp-sm)$-dimensional $\Fq$-subspace of $\FQ$. For $1 \le h \le t$,
$$
U_h=\Hrp \cap ( \underbrace{\zs \times \cdots \times \zs}_{h-1} \times \underset{h}{\FQ} \times \underset{h+1}{\zs} \times \cdots \underset{t}{\zs})
$$
is of the form
$$
U_h=\underbrace{\zs \times \cdots \times \zs}_{h-1} \times \underset{h}{Y_h} \times \underset{h+1}{\zs} \times \cdots \underset{t}{\zs},
$$
where $Y_h$ is an $\Fq$-subspace of $\FQ$. We can further require that for each $1 \le h \le t$, $\left|Y_h \cap (-\Ct_0)\right|$ achieves the maximal value as discussed in Lemma~\ref{lemmaint}. Therefore, by Equation (\ref{eqn6}), $F(H_r)=\sum_{h=1}^t f(u_h)$, where $u_1=\cdots=u_s=m$, $u_{s+1}=\rp-sm=(t-s)m-r$ and $u_{s+2}=\cdots=u_t=0$. Clearly $\ul{u}=(u_1,\ldots,u_t) \in \Lrp$ is the maximum according to the order $\su$. Then by Lemma~\ref{lemmaord}, the value $F(H_r)$ is maximal, namely, this $H_r$ is the $r$-dimensional subspace that maximizes $F(H_r)$ for all $H_r \in \subsp$. By Equation (\ref{eqn2}) and Lemma~\ref{lemmaint}, we obtain
$$
N_r=\frac{2}{t\de}\left[\frac{s(q^m-1)}{2}+f((t-s)m-r)\right].
$$
This completes the proof of (ii) of Theorem \ref{sec3:main1}.

\section{Proof of Theorem \ref{sec3:main2}}\label{sec:p3}

In this section, we consider the case $e>t$ and $N=1$. If $t=1$, the code $\cC$ is irreducible which was considered in \cite{YLFL}. So we may assume that $e>t\ge2$. Recall that for $1 \le r \le tm$ and $H_r \in \subsp$, by Equation (\ref{eqn1}) we have
\begin{equation*}
N(H_r)=\frac{1}{e\de q^r} \sum_{\ub \in H_r}\sum_{h=1}^e \etang,
\end{equation*}
where $g=\ga^a$, $\be=\ga^{\frac{Q-1}{e}}$ and $\be_j=\be^{\De_j}$ for $1 \le j \le t$.
For $1 \le h \le e$, define
\[W_h:=\left\{\ul{b}=(b_1,\ldots,b_t) \in \FQ^t: \sum_{j=1}^tb_j \beta_j^h=0\right\}.\]
Then each $W_h$ is an $\Fq$-vector space of dimension $(t-1)m$. Let $$w_h:=\dimq (H_r \cap W_h).
$$
Since $\etao_0=Q-1$ and $\etao_y=-1$ if $y \in \FQ^*$, we have
\begin{align}
N(H_r)&=\frac{1}{e\de q^r} \sum_{h=1}^e \left((Q-1)q^{w_h}-(q^r-q^{w_h})\right) \notag \\
        &=\frac{1}{e\de q^r} \left(Q\sum_{h=1}^e q^{w_h}-eq^r\right)=\frac{Q}{e\de q^r} \sum_{h=1}^e q^{w_h}-\frac{1}{\de}. \label{eqn7}
\end{align}
To find $\max \left\{N(H_r): H_r \in \subsp\right\}$, we may assume without loss of generality that $w_1 \ge w_2 \ge \ldots \ge w_e$. We make a change of variables $\ul{b}=(b_1,\ldots,b_t) \mapsto \ul{y}=(y_1,\ldots,y_t)$ given by $y_h=g^h\sum_{j=1}^tb_j \beta_j^h, 1 \le h \le t$. This clearly defines an $\Fq$-isomorphism $\phi: \FQ^t \to \FQ^t$. For $t+1 \le h \le e$, define
$y_h:=g^h \sum_{j=1}^t b_j\beta_j^h$. Since $\phi$ is an isomorphism, there exist $\lambda_{h,1}, \ldots, \lambda_{h,t} \in \FQ$ such that
\[y_h=\sum_{i=1}^t\lambda_{h,i} \, y_i, \quad t+1 \le h \le e. \]
Since $\left\{\De_1 \pmod{e},\ldots, \De_t \pmod{e}\right\}$ is an arithmetic progression, it is known that $\lambda_{h,i} \ne 0, \forall \,h,i$ where $ t+1 \le h \le e, 1 \le i \le t$ (see \cite{YXDL}).

Thus Equation (\ref{eqn7}) can be written as
\begin{align*}
N(H_r)=\frac{Q}{e\de q^r} \sum_{h=1}^e q^{\tw_h}-\frac{1}{\de},
\end{align*}
where
$$\tw_h:=\dimq \left(\wHr \cap \wWh\right) \mbox{ and } \tw_1 \ge \tw_2 \ge \ldots \ge \tw_e.$$
Here $\wHr=\phi(H_r)$ and $\wWh=\phi(W_h)$ is of the form
\[
\wWh=\underbrace{\FQ\times\cdots\FQ}_{h-1}\times
\underset{h}{\zs}\times\underset{h+1}{\FQ}\times\cdots\times\underset{t}{\FQ}, \quad 1 \le h \le t,
\]
and
\[
\wWh=\left\{(y_1,\cdots,y_t)\in\FQ^t : y_h=\sum_{i=1}^t\la_{h,i} y_i=0\right\}, \quad t+1 \le h \le e.
\]
Since $e \le q-1$, to find the $H_r$ that maximizes $N(H_r)$, similar to the case that $e=t$ and $N=1$, the first priority is to make $\tw_1$ as large as possible, once this is done, then we make $\tw_2$ as large as possible, etc, and finally we make $\tw_e$ as large as possible.

Now suppose that $(t-s-1)m <r \le (t-s)m$ for some $0 \le s \le t-1$. Since $\dimq\left(\bigcap_{h=1}^s \wWh\right)=(t-s)m$, similar to the case that $e=t$ and $N=1$, $N(H_r)$ achieves the maximal value for $H_r \in \subsp$ satisfying the property that $\bigcap_{h=1}^{s+1}\wWh \subset \wHr \subset \bigcap_{h=1}^s\wWh$. Consequently, $\wHr$ is of the form
$$
\wHr=\begin{cases}
\underbrace{\zs \times \cdots \times \zs}_{s} \times \underset{s+1}{T} \times \underset{s+2}{\FQ} \times \cdots \times \underset{t}{\FQ} & \text{ if $0 \le s \le t-2$,} \\
\underbrace{\zs \times \cdots \times \zs}_{s} \times \underset{s+1}{T} & \text{ if $s=t-1$.}
\end{cases}
$$
Here $T$ is an $\Fq$-subspace of $\FQ$ with dimension $r-(t-s-1)m$.

If $0 \le s \le t-2$, it is easy to check that $\tw_1=\cdots=\tw_s=r$, $\tw_{s+1}=(t-s-1)m$ and $\tw_{s+2}=\cdots=\tw_t=r-m$. For $t+1 \le h \le e$, $\tw_h$, by definition, is the $\Fq$-dimension of the space of $(y_{s+1},\ldots,y_t)$ such that
$$
\la_{h,s+1}y_{s+1}+\la_{h,s+2}y_{s+2}+\cdots+\la_{h,t}y_t=0,
$$
where $y_{s+1} \in T$ and $y_{s+2},\ldots,y_t \in \FQ$. Since $\forall \, h,i$, $\lambda_{h,i} \ne 0$, it is easy to see $\tw_h=r-m$ for $t+1 \le h \le e$. Therefore this $H_r$ maximizes $\sum_{h=1}^eq^{\tw_h}$ whose value is given by
$$
\sum_{h=1}^eq^{\tw_h}=sq^r+q^{(t-s-1)m}+(e-s-1)q^{r-m}.
$$

If $s=t-1$, we can see that $\tw_1=\cdots=\tw_s=r$ and $\tw_{s+1}=0$. Hence for any $s+2 \le h \le e$, by the inequality $0 \le \tw_h \le \tw_{s+1}$, we also have $\tw_h=0$ for $s+2 \le h \le e$. That is, the $H_r$ maximizes $\sum_{h=1}^eq^{\tw_h}$ whose value is given by
$$
\sum_{h=1}^eq^{\tw_h}=sq^r+e-s.
$$
Then a routine computation completes the proof.

\section{Conclusion}\label{sec6}

The generalized Hamming weights (GHWs) are fundamental parameters of linear codes. They convey the structural information of a linear code and determine its performance in various applications. However, the computation of the GHWs of linear codes is difficult in general. In this paper, we study the GHWs of a family of reducible codes introduced in \cite{YXDL} and obtain the weight hierarchy in several cases. This is achieved by extending the idea of \cite{YLFL} into higher dimension and by employing some interesting combinatorial arguments. It shall be noted that these cyclic codes may have arbitrary number of nonzeroes.

\subsection*{Acknowledgments}
The research of M. Xiong was supported by RGC grant number 606211 and 609513 from Hong Kong. The research of G. Ge was supported by the National Natural Science Foundation of China under Grant No. 61171198 and Grant No. 11431003, the Importation and Development of High-Caliber Talents Project of Beijing Municipal Institutions, and Zhejiang Provincial Natural Science Foundation of China under Grant No. LZ13A010001.

\end{document}